\documentclass[a4paper,USenglish,cleveref, autoref, thm-restate]{lipics-v2021}

\pdfoutput=1 
\hideLIPIcs  

\bibliographystyle{plainurl}

\title{Computational Complexity of Swish} 

\funding{\emph{General}: JSPS KAKENHI Grant Numbers JP20H00605, JP24H00690, JP25H01114.}

\author{Takashi Horiyama}{Faculty of Information Science and Technology, Hokkaido University, Japan}{horiyama@ist.hokudai.ac.jp}{https://orcid.org/0000-0001-9451-259X}{JSPS KAKENHI Grant Number JP23K24806.}

\author{Takehiro Ito}{Graduate School of Information Sciences, Tohoku University, Japan}{takehiro@tohoku.ac.jp}{https://orcid.org/0000-0002-9912-6898}{JSPS KAKENHI Grant Number JP24H00686.}

\author{Jun Kawahara}{Graduate School of Informatics, Kyoto University, Japan \and \url{https://www.algo.cce.i.kyoto-u.ac.jp/jkawahara/}}{jkawahara@i.kyoto-u.ac.jp}{https://orcid.org/0000-0002-4096-3923}{}

\author{Shin-ichi Minato}{Graduate School of Informatics, Kyoto University, Japan}{minato@i.kyoto-u.ac.jp}{https://orcid.org/0000-0002-1397-1020}{}

\author{Akira Suzuki}{Center for Data-driven Science and Artificial Intelligence, Tohoku University, Japan \and \url{http://www.ecei.tohoku.ac.jp/alg/suzuki/}}{akira@tohoku.ac.jp}{https://orcid.org/0000-0002-5212-0202}{JSPS KAKENHI Grant Number JP25K14980.}

\author{Ryuhei Uehara}{School of Information Science, Japan Advanced Institute of Science and Technology, Japan \and \url{https://www.jaist.ac.jp/~uehara/}}{uehara@jaist.ac.jp}{https://orcid.org/0000-0003-0895-3765}{}

\author{Yutaro Yamaguchi}{Graduate School of Information Science and Technology, Osaka University, Japan \and \url{https://ygussany.github.io/index.html}}{yutaro.yamaguchi@ist.osaka-u.ac.jp}{}{JST CRONOS Japan Grant Number JPMJCS24K2.}

\authorrunning{T.\,Horiyama, T.\,Ito, J.\,Kawahara, S.\,Minato, A.\,Suzuki, R.\,Uehara, and Y.\,Yamaguchi} 

\Copyright{Jane Open Access and Joan R. Public} 

\ccsdesc[500]{Theory of computation~Problems, reductions and completeness}
\ccsdesc[300]{Theory of computation~Design and analysis of algorithms}
\ccsdesc[300]{Theory of computation~Graph algorithms analysis}

\keywords{Swish, Computational complexity, Matching, Parity-constrained cycles} 

\category{} 

\relatedversion{} 

\nolinenumbers 

\EventEditors{John Q. Open and Joan R. Access}
\EventNoEds{2}
\EventLongTitle{42nd Conference on Very Important Topics (CVIT 2016)}
\EventShortTitle{CVIT 2016}
\EventAcronym{CVIT}
\EventYear{2016}
\EventDate{December 24--27, 2016}
\EventLocation{Little Whinging, United Kingdom}
\EventLogo{}
\SeriesVolume{42}
\ArticleNo{23}

\usepackage{mathtools}
\usepackage{cite}

\newcommand{\cC}{\mathcal{C}}
\newcommand{\cS}{\mathcal{S}}
\renewcommand{\rotate}[1]{{#1}^\mathrm{R}}
\newcommand{\flip}[1]{{#1}^\mathrm{F}}
\newcommand{\rotateflip}[1]{{#1}^\mathrm{RF}}

\newcommand{\problemdef}[3]{
    \begin{center}
    \fbox {   \parbox[c]{0.95\textwidth}{
        \textsc{\large #1} 
        
         \textbf{Input:} #2 \\
         \textbf{Goal:} #3 
        }}
    \end{center} 
}

\newcommand{\Swish}{\textsc{Swish}}

\usepackage{CJKutf8}

\AtBeginDocument{\begin{CJK}{UTF8}{min}}
\AtEndDocument{\end{CJK}}


\begin{document}

\maketitle

\begin{abstract}
Swish is a card game in which players are given cards having symbols (hoops and balls), and find a valid superposition of cards, called a ``swish.''
Dailly, Lafourcade, and Marcadet (FUN 2024) studied a generalized version of Swish and showed that the problem is solvable in polynomial time with one symbol per card, while it is NP-complete with three or more symbols per card.
In this paper, we resolve the previously open case of two symbols per card, which corresponds to the original game.
We show that Swish is NP-complete for this case.
Specifically, we prove the NP-hardness when the allowed transformations of cards are restricted to a single (horizontal or vertical) flip or 180-degree rotation, and extend the results to the original setting allowing all three transformations.
In contrast, when neither transformation is allowed, we present a polynomial-time algorithm.
Combining known and our results, we establish a complete characterization of the computational complexity of Swish with respect to both the number of symbols per card and the allowed transformations. 

\end{abstract}

\section{Introduction}
\label{sec:1_intro}
\Swish~\cite{Swish2011} is a card game in which players try to find a valid superposition of cards.
In the game, cards are overlaid so that every hoop (circle)~\textcircled{} meets a ball (point)~$\bullet$, every ball meets a hoop, and neither two hoops nor two balls overlap; such a configuration is called a \emph{swish}.
The commercial version of this game consists of $60$ transparent cards, each with exactly one hoop and exactly one ball arranged on a grid of height~$4$ and width~$3$ (see Figure~\ref{fig:swish} for examples).
Cards may be flipped horizontally or vertically and rotated by $180$ degrees, which yields $36$ distinct cards up to symmetry; some (equivalent) cards therefore appear twice in the deck.
In the commercial game, hoops and balls are colored in blue, orange, purple, and green, but these colors serve only to visually distinguish positions.
From a combinatorial point of view, the game can be regarded as \emph{monochromatic}, that is, symbols themselves carry no colors.
In each round of the game, given $16$ cards in common drawn from the deck, the players compete to find a swish faster than their opponents and call out ``Swish!''
Despite its simple rules, {\Swish} exhibits a rich combinatorial structure that makes it an appealing subject for mathematical and algorithmic analysis.

\begin{figure}
  \centering
  \includegraphics[width=0.9\linewidth]{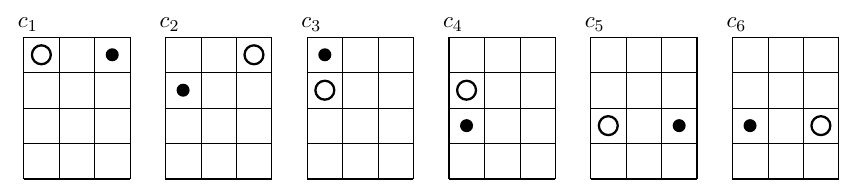}
  \caption{Example of cards in \Swish. Following the notation introduced in Section~\ref{sec:2_1_swish}, $c_1 = ((1, 1), (1, 3))$, $c_2 = ((1, 3), (2, 1))$, etc. The swishes of this instance without rotation or flip are $\{c_1, c_2, c_3\}$, $\{c_5, c_6\}$, and $\{c_1, c_2, c_3, c_5, c_6\}$.
  If we are allowed both flip and rotation, $\{c_2, c_3, c_6\}$ is also a swish, where we use the cards as $c_2 = ((1, 3), (2, 1))$, $\flip{c}_3 = ((2, 3), (1, 3))$, and $\rotate{c}_6 = ((2, 1), (2, 3))$, where the superscripts $\mathrm{F}$ and $\mathrm{R}$ mean horizontal flip and $180$-degree rortaion, respectively.}
  \label{fig:swish}
\end{figure}

Rowland~\cite{Rowland2017} initiated a mathematical study of {\Swish}, focusing on structural and probabilistic aspects of the commercial deck.
Using Burnside’s theorem, she showed that the $60$-card deck contains $36$ distinct cards up to symmetry.
She further analyzed the distribution of symbol positions, showing that they do not appear uniformly,
and studied the probability of forming a swish with two cards.
{\Swish} was also discussed from a linear-algebraic perspective via matrix representations.

From a computational complexity viewpoint, {\Swish} was first studied systematically by Dailly, Lafourcade, and Marcadet~\cite{dailly2024did} in FUN~2024.
They introduced a generalized version of {\Swish} and analyzed its complexity with respect to the number of symbols per card.
They showed that {\Swish} can be solved in polynomial time when each card contains a single symbol, whereas the problem becomes NP-complete when cards may contain three or more symbols.
To establish NP-completeness, they also considered a restricted variant in which cards cannot be flipped or rotated, often referred to as \textsc{Simple-Swish}.

In this paper, we complete the complexity classification of {\Swish} by addressing two aspects that remain open.
The first concern is the number of symbols per card: what is the complexity of {\Swish} when each card contains exactly two symbols, as in the original game?
The second concern is the role of geometric transformations: how does the complexity change when flip and rotation are selectively allowed or forbidden?

We show a sharp boundary.
When neither flip nor rotation is allowed, {\Swish} with two symbols per card admits a polynomial-time algorithm (Theorem~\ref{thm:main_poly}).
In contrast, if even a single type of transformation (i.e., one of the horizontal flip, vertical flip, and $180$-degree rotation) is allowed,
the problem becomes NP-complete, already with one hoop and one ball per card (Theorem~\ref{thm:main_hard}).
As a consequence, {\Swish} remains NP-complete in these settings even when cards may contain more symbols.
Together with previous results, this establishes a complete characterization of the computational complexity of {\Swish} with respect to both the number of symbols per card and the allowed symmetries. 

The rest of the paper is organized as follows.
In Section~\ref{sec:2_preliminaries}, we give definitions and formalize the problems.
In Section~\ref{sec:3}, we state our results and prove them separately in subsections.
In Section~\ref{sec:4_conclusion}, we conclude the paper with remarks on future work.
\section{Preliminaries}
\label{sec:2_preliminaries}

\subsection{Swish}\label{sec:2_1_swish}
We formalize a generalization of {\Swish} in terms of the size of cards.
Fix an $h \times w$ integer grid, where $(i, j)$ $(1 \le i \le h,\ 1 \le j \le w)$ denotes the point $i$-th from the top and $j$-th from the left; $h = 4$ and $w = 3$ in the original \Swish.
As we mainly focus on the case where each card has exactly one hoop and exactly one ball, each \emph{card} $c$ is specified as an ordered pair of two points on the grid, $((i_1, j_1), (i_2, j_2))$, which means that $c$ has a \emph{hoop} \textcircled{} at $(i_1, j_1)$ and a \emph{ball} $\bullet$ at $(i_2, j_2)$.
If we are allowed to \emph{rotate} a card $c = ((i_1, j_1), (i_2, j_2))$, we can use $c$ as a card $\rotate{c} \coloneqq ((h - i_1 + 1, w - j_1 + 1), (h - i_2 + 1, w - j_2 + 1))$; here, it is restricted to $180$-degree rotation even when $h = w$.
If we are allowed to \emph{flip} $c$, we can use $c$ as a card $\flip{c} \coloneqq ((i_1, w - j_1 + 1), (i_2, w - j_2 + 1))$; here, it is restricted to \emph{horizontal} flip by symmetry (because if both horizontal and vertical flips are allowed, their combination results in the $180$-degree rotation).
If we are allowed both to rotate and to flip $c$, we can use $c$ as, in addition to $\rotate{c}$ and $\flip{c}$, a card $\rotateflip{c} = ((h - i_1 + 1, j_1), (h - i_2 + 1, j_2))$, which coincides with the card obtained by flipping $c$ vertically.

Suppose that a set $\cC$ of cards is given.
A subset $\cS \subseteq \cC$ of cards is called a \emph{swish} if the following conditions are satisfied by flipping or/and rotating some of the cards in $\cS$ if they are allowed:
for every point $p$ on the grid, either
\begin{itemize}
    \item no card in $\cS$ has a hoop and no card in $\cS$ has a ball at $p$, or
    \item exactly one card in $\cS$ has a hoop and exactly one card in $\cS$ has a ball at $p$.
\end{itemize}
See Figure~\ref{fig:swish} again for examples.

The objective of the original game is to find a large swish.
Then, we can consider natural decision problems as follows:

\problemdef{Swish}
    {A set $\cC$ of cards on the $h \times w$ grid and an integer $k$.}
    {Decide whether there is a swish $\cS \subseteq \cC$ such that $|\cS| \ge k$ or not, where both flipping and rotating any cards are allowed.}

\problemdef{Swish without Flip}
    {A set $\cC$ of cards on the $h \times w$ grid and an integer $k$.}
    {Decide whether there is a swish $\cS \subseteq \cC$ such that $|\cS| \ge k$ or not, where rotating any cards is allowed but flipping them is not.}

\problemdef{Swish without Rotation}
    {A set $\cC$ of cards on the $h \times w$ grid and an integer $k$.}
    {Decide whether there is a swish $\cS \subseteq \cC$ such that $|\cS| \ge k$ or not, where flipping any cards is allowed but rotating them is not.}

\problemdef{Swish without Flip or Rotation}
    {A set $\cC$ of cards on the $h \times w$ grid and an integer $k$.}
    {Decide whether there is a swish $\cS \subseteq \cC$ such that $|\cS| \ge k$ or not, where neither flipping nor rotating any cards is allowed.}

\subsection{Graphs}
We use the standard terms and notation on graphs.
An undirected edge between two vertices $u$ and $v$ is denoted by an unordered pair $\{u, v\}$, and a directed edge from $u$ to $v$ is denoted by an ordered pair $(u, v)$.

In a graph (either undirected or directed), a \emph{matching} is an edge subset such that each vertex has at most one incident edge in it.
A matching is called \emph{perfect} if all the vertices have exactly one incident edge.
When each edge is assigned with a \emph{weight} value, the \emph{weight} of an edge subset is defined as the sum of the weights of edges therein.

In a directed graph, the \emph{in-degree} of vertex $v$ is the number of edges incoming to $v$, and the \emph{out-degree} is the number of edges outgoing from $v$.
A \emph{directed cycle} is a connected directed graph in which every vertex has both in-degree and out-degree exactly $1$.
The \emph{length} of a directed cycle is the number of vertices (or edges, equivalently) therein.
\section{Algorithm and Complexity}
\label{sec:3}

The main results of this paper, solving an open problem in \cite{dailly2024did}, are stated as follows:

\begin{theorem}\label{thm:main_poly}
    \textsc{Swish without Flip or Rotation} is solved in polynomial time.
\end{theorem}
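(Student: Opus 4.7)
The plan is to reformulate \textsc{Swish without Flip or Rotation} as a cycle-packing problem in a directed multigraph and then solve that via maximum-weight bipartite perfect matching.

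First, I would construct a directed multigraph $G = (V, E)$ whose vertex set $V$ consists of the grid points appearing in at least one card and whose arc multiset $E$ contains a directed arc $(p, q)$ for each card $c = (p, q) \in \cC$, oriented from the hoop position to the ball position. Because no transformation is allowed, every card is a fixed arc. The key observation is that a subset $\cS \subseteq \cC$ is a swish if and only if, restricted to the arcs in $\cS$, every vertex of $G$ has in-degree equal to its out-degree with both values in $\{0, 1\}$. Equivalently, the arcs of $\cS$ form a vertex-disjoint union of directed simple cycles, and $|\cS|$ equals the total number of arcs in this packing (which also equals the number of vertices it covers).

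Next, I would reduce this cycle-packing problem to maximum-weight perfect matching in a bipartite graph. Introduce $V^+ = \{v^+ : v \in V\}$ and $V^- = \{v^- : v \in V\}$ and build a bipartite graph $B$ on $V^+ \cup V^-$ as follows: for each arc $(u, v) \in E$ include the edge $\{u^+, v^-\}$ with weight $1$ (parallel copies can be collapsed, since at most one of identically-placed cards can belong to a swish), and for each $v \in V$ include an auxiliary edge $\{v^+, v^-\}$ of weight $0$ so that $B$ admits a perfect matching. I claim that the maximum weight of a perfect matching in $B$ equals the maximum swish size. Indeed, in any perfect matching $M$, for each vertex $v$ either $v^+$ and $v^-$ are paired by the auxiliary edge (so $v$ is uncovered), or $v^+$ is matched to some $u^-$ via a real edge, which forces $v^-$ to be matched to some $w^+$ via another real edge (since the only auxiliary option at $v$ is already used), yielding a chosen out-arc and a chosen in-arc at $v$. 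Hence the selected real edges form a disjoint union of directed cycles, and the weight of $M$ counts exactly these arcs. Conversely, any swish extends to a perfect matching of the same weight by adding auxiliary edges on uncovered vertices.

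Since maximum-weight bipartite perfect matching can be computed in polynomial time, e.g.\ by the Hungarian method, this yields a polynomial-time algorithm: compute the optimal matching weight $W$ and answer ``yes'' if and only if $W \ge k$. The points that most need care in a formal write-up are the biconditional between the swish condition and the cycle-packing structure, and the argument that matching $v^+$ through a real edge forces $v^-$ to use a real edge as well; both are immediate from the construction. I do not anticipate a substantive obstacle, since once the swish condition is recast as ``in-degree equals out-degree, each at most one'', the problem becomes a clean, standard matching question.
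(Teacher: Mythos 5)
Your construction is essentially identical to the paper's: the same bipartite graph on $V^+ \cup V^-$ with weight-$1$ card edges $\{p^+, q^-\}$ and weight-$0$ auxiliary edges $\{p^+, p^-\}$, reduced to maximum-weight perfect matching. The directed cycle-packing reformulation is a nice intermediate lens but does not change the argument, which is correct.
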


\begin{theorem}\label{thm:main_hard}
    \textsc{Swish}, \textsc{Swish without Flip}, and \textsc{Swish without Rotation} are NP-complete.
\end{theorem}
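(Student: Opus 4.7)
Membership in NP is immediate in all three cases: a candidate swish, together with the chosen transformation for each of its cards, is a short certificate that can be verified in polynomial time by scanning each grid point and counting the incident hoops and balls. The interesting direction is hardness.

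The starting viewpoint is that a card with one hoop and one ball is a directed edge in the graph $G$ whose vertices are the $hw$ grid points, oriented from the hoop position to the ball position. With this identification, a swish is exactly a vertex-disjoint union of directed cycles in $G$: the balance condition forces in-degree and out-degree to be $0$ or $1$ at every vertex. A transformation (rotation, horizontal flip, or the combined vertical flip) maps a card to another directed edge obtained by reindexing the two endpoints symmetrically. Thus when transformations are allowed, each card supplies a small list of candidate directed edges, and at most one of them may be included in the chosen cycle packing. This is exactly the kind of binary choice that typically makes disjoint-cycle packing NP-hard: I plan to reduce from a parity- or pairing-constrained cycle-packing problem (in keeping with the paper's keyword \textsc{Parity-constrained cycles}), or, failing that, from $3$-\textsc{SAT} via standard variable and clause gadgets.

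For \textsc{Swish without Flip}, each variable gadget would exploit a card whose two rotational forms represent the two literals, so that selecting one form rules out the other; clause gadgets would be small cycle sub-patterns that can be completed if and only if at least one incident literal form is chosen. For \textsc{Swish without Rotation}, I would mirror the construction with the horizontal flip playing the role of $180$-degree rotation, replacing the involution $(i,j)\mapsto(h-i+1,\,w-j+1)$ by the involution $(i,j)\mapsto(i,\,w-j+1)$. Because $h$ and $w$ are input parameters, polynomially many disjoint gadgets can be embedded on a sufficiently large grid, and trivial padding cards can occupy cells that must not contribute to any swish.

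The main obstacle I anticipate is extending the hardness to \textsc{Swish}, where all three transformations are simultaneously allowed: a card carefully designed to encode a binary rotational choice may suddenly admit two further interpretations under flipping, producing spurious swishes and breaking the correctness argument. To cope with this, I would place every gadget card at positions that either (i) are fixed by the undesired transformations, rendering the extra freedom vacuous, or (ii) are mapped by those transformations to positions outside the active region used by the construction, so that the resulting edges cannot be completed to any cycle and are therefore useless. A minor auxiliary nuisance is to avoid self-symmetric cards, whose transformed variants coincide with the original and would collapse the intended binary choice; this can be handled by a generic perturbation of the gadget layout.
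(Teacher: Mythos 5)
Your high-level viewpoint is correct and matches the paper's: a card is a directed edge from its hoop to its ball, a swish is a vertex-disjoint union of directed cycles on the covered grid points, and each allowed transformation turns a card into a short menu of candidate edges. You even name the right family of source problems (parity-constrained cycles). But the proposal stops at the point where the actual work begins: no concrete source problem is fixed, no gadget is constructed, and no correctness argument is given. "Reduce from a parity-constrained cycle-packing problem, or, failing that, from \textsc{3-SAT} via standard variable and clause gadgets" is a plan, not a proof, and the \textsc{3-SAT} fallback is not obviously workable -- a swish imposes only the local condition that in-degree equals out-degree and both lie in $\{0,1\}$ at every point, and it is not clear how "standard" clause gadgets enforcing a disjunction would be realized under such a weak constraint. (Also, your remark about padding cards to occupy inactive cells is unnecessary: a swish may simply leave grid points untouched.)

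The missing idea is the specific mechanism by which a \emph{single} involution of the grid encodes a \emph{parity} constraint. The paper reduces from the problem of deciding whether a digraph $G=(V,E)$ has a dicycle-factor all of whose cycles have even length (NP-complete by \cite{bang2014arc,horsch2025odd}), using an $|V|\times 4$ grid: edge $(u,v)$ becomes a card with hoop at $(u,1)$ and ball at $(v,1)$, whose flip lives entirely in column $4$; three filler cards per vertex $v$ occupy $(v,1),\dots,(v,4)$ and, because the middle columns can only be covered by them, force exactly one of $(v,1),(v,4)$ to accept a hoop and the other a ball. Consequently the incoming and outgoing edge-cards at each vertex must use \emph{opposite} flip states, so the flip state is a proper $2$-coloring of the edges along each chosen cycle -- which exists iff every cycle is even. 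That is the heart of the reduction, and nothing in your proposal supplies it. For the full \textsc{Swish} and \textsc{Swish without Flip} variants, your idea (i) -- place gadgets at points fixed by the unwanted transformation -- is exactly what the paper does, by flattening the grid to a single column of height $4|V|$ so that the horizontal flip is the identity and the $180$-degree rotation plays the role of the flip; but without the core reduction above, this observation has nothing to carry over.
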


Note that all the problems are clearly in NP, since if we are given a swish $\cS$ with additional information of how to use each card in $\cS$ (flipping or/and rotating it or not), we can easily confirm its validity in polynomial time.

\subsection{Polynomial-Time Algorithm for \textsc{Swish without Flip or Rotation}}
\label{sec:3_0axis}

In this section, we prove Theorem~\ref{thm:main_poly}.

Let $(\cC, k)$ be an instance of \textsc{Swish without Flip or Rotation}, i.e., $\cC$ is a set of cards and $k$ is an integer.
We construct from $\cC$ a bipartite graph $G = (V^+, V^-; E)$ with edge weight $w \colon E \to \mathbb{R}$ as follows (see Figure~\ref{fig:swish_to_graph}).

First, let $V$ be the set of points on the grid.
For each point $p \in V$, we create two copies $p^+$ and $p^-$.
Let $V^+ \coloneqq \{\, p^+ \mid p \in V \,\}$ and $V^- \coloneqq \{\, p^- \mid p \in V \,\}$.

Next, for each card $c = (p, q) \in \cC$, we create an edge $e_c = \{p^+, q^-\}$ of weight $w(e_c) = 1$.
Also, for each point $p \in V$, we create an edge $f_p = \{p^+, p^-\}$ of weight $w(f_p) = 0$.
Let $E$ be the set of created edges.

\begin{figure}
  \centering
  \includegraphics[width=0.33\linewidth]{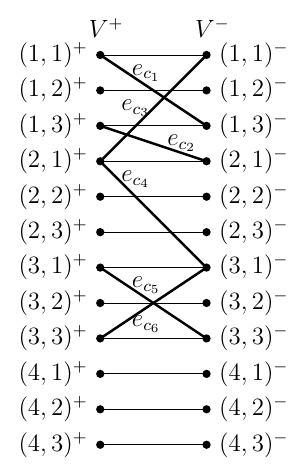}
  \caption{Bipartite graph $G = (V^+, V^-; E)$ constructed from the Swish instance in Figure~\ref{fig:swish}. $\{e_{c_1}, e_{c_2}, e_{c_3}\} \cup \{\, f_p = \{p^+, p^-\} \mid p \in V \setminus \{(1, 1), (1, 3), (2, 1)\} \,\}$ is a perfect matching, which corresponds to swish $\{c_1, c_2, c_3\}$ without rotation or flip. The weight of this perfect matching is $3$, which is equal to the size of the corresponding swish.}
  \label{fig:swish_to_graph}
\end{figure}

Since it is well-known that a maximum-weight perfect matching in a bipartite graph can be computed in polynomial time \cite{kuhn1955hungarian}, the following claim completes the proof.

\begin{claim}
    There is a swish $\cS \subseteq \cC$ such that $|S| = k$ without flipping or rotating any cards if and only if $(G, w)$ has a perfect matching of weight $k$.
\end{claim}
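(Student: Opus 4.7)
The plan is to establish a size- and weight-preserving bijection between swishes $\cS \subseteq \cC$ of size $k$ (without flipping or rotating) and perfect matchings of $(G, w)$ of weight exactly $k$. In this correspondence, a card $c = (p, q) \in \cS$ is encoded by the edge $e_c$ (weight $1$), while every grid point $p$ untouched by any card in $\cS$ is encoded by its zero-weight edge $f_p$. Since the $e_c$-edges are the only ones contributing weight and each contributes exactly $1$, the weight of any perfect matching of $G$ equals the number of $e_c$-edges it uses.

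For the forward direction, I would start from a swish $\cS$ with $|\cS| = k$ and set $M \coloneqq \{\, e_c \mid c \in \cS \,\} \cup \{\, f_p \mid p \in V \text{ is untouched by any card in } \cS \,\}$. The swish condition says that at every grid point $p$, either no card in $\cS$ has a hoop or ball at $p$, or exactly one card in $\cS$ has a hoop at $p$ and exactly one card has a ball at $p$. In the first case, $p^+$ and $p^-$ are simultaneously free and jointly covered by $f_p$. In the second case, $p^+$ is saturated by the unique $e_c$ with $c = (p, \cdot) \in \cS$, and $p^-$ is saturated by the unique $e_{c'}$ with $c' = (\cdot, p) \in \cS$, while $f_p \notin M$. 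Thus every vertex of $G$ is covered exactly once, and $M$ has weight $|\cS| = k$.

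For the backward direction, given a perfect matching $M$ of weight $k$, I would set $\cS \coloneqq \{\, c \in \cC \mid e_c \in M \,\}$, which yields $|\cS| = k$ by the weight computation above. To verify the swish condition at an arbitrary grid point $p$, I would split on whether $f_p \in M$: if yes, then both $p^+$ and $p^-$ are saturated by $f_p$, so no card in $\cS$ touches $p$; if no, then $p^+$ must be saturated by some $e_c$ with $c = (p, \cdot) \in \cS$ and $p^-$ must be saturated by some $e_{c'}$ with $c' = (\cdot, p) \in \cS$, so exactly one hoop and exactly one ball from $\cS$ are placed at $p$. I do not anticipate a serious obstacle: the essential observation is that splitting each grid point into a $+$-copy (tracking hoops) and a $-$-copy (tracking balls), linked by the zero-weight edge $f_p$, precisely encodes the swish's \emph{all-or-nothing} condition at $p$, after which both directions reduce to a direct case analysis.
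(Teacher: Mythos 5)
Your proposal is correct and matches the paper's own argument essentially step for step: the same matching $M = \{e_c \mid c \in \cS\} \cup \{f_p \mid p \text{ untouched}\}$ in the forward direction, and the same case split on whether $f_p \in M$ (equivalently, whether $p^+$ and $p^-$ are saturated by card-edges) in the backward direction. No gaps.
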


\begin{proof}
    Let $\cS \subseteq \cC$ be a swish such that $|S| = k$ without flipping or rotating any cards.
    We construct a perfect matching $M \subseteq E$ in $G$ with $w(M) = k$.
    For each card $c = (p, q) \in \cS$, we add the edge $e_c = \{p^+, q^-\}$ of weight $1$ to $M$.
    Then, as $\cS$ is a swish without flip or rotation, for each point $p \in V$, either
    \begin{itemize}
        \item both $p^+$ and $p^-$ have no incident edge of the current $M$, or
        \item both $p^+$ and $p^-$ have exactly one incident edge of the current $M$.
    \end{itemize}
    Thus, we can make $M$ a perfect matching by adding the edge $f_p$ of weight $0$ for each point $p$ in the former situation.
    The weight of the resulting perfect matching is clearly $|\cS| = k$.

    Conversely, let $M \subseteq E$ be a perfect matching in $G$ with $w(M) = k$.
    We construct a swish $\cS \subseteq \cC$ such that $|S| = k$ without flipping or rotating any cards.
    Let $\cS \coloneqq \{\, c \in \cC \mid e_c \in M \,\}$.
    By definition of edge weight $w$, we have $|\cS| = w(M) = k$.
    Since $M$ is a perfect matching and each edge $f_p \in M$ connects two vertices $p^+$ and $p^-$ from the same point $p \in V$, if one of $p^+$ and $p^-$ is covered by some edge $e_c \in M$, then the other is also covered by some edge $e_{c'} \in M$.
    This implies that for each point $p \in V$, either
    \begin{itemize}
        \item no card in $\cS$ has a hoop and no card in $\cS$ has a ball at $p$, or
        \item exactly one card in $\cS$ has a hoop and exactly one card in $\cS$ has a ball at $p$.
    \end{itemize}
    Thus, $\cS$ is a swish without flip or rotation, which completes the proof.
\end{proof}

\begin{remark}
    The computational time is bounded by $\mathrm{O}(n^3)$ or $\mathrm{O}((m + n)n \log n)$ \cite{kuhn1955hungarian,edmonds1972theoretical}, where $n \coloneqq |V| = hw$ and $m$ is the number of input cards.
    Here, we may assume that $h$ and $w$ are bounded by $\mathrm{O}(m)$, since a row or column that contains neither hoop nor ball can be removed without loss of generality.
    Thus, this is polynomial in the input size.
\end{remark}

\begin{remark}
    We can consider more general settings of \textsc{Swish without Flip or Rotation}.
    Suppose that we want to find a swish of size exactly $k$.
    Then, the same reduction results in the so-called \emph{exact matching} problem, in which we are required to find a perfect matching of weight \emph{exactly} $k$.
    For this problem, a \emph{randomized} polynomial-time algorithm \cite{mulmuley1987matching} is well-known, but the existence of a \emph{deterministic} polynomial-time algorithm has been a major open problem in theoretical computer science for more than 40 years since its appearance \cite{papadimitriou1982complexity}.
    
    As another generalization, suppose that each input card may have either no hoop but exactly two balls at different points or no ball but exactly two hoops at different points.
    Also in this case, the reduction and the proof are almost the same (e.g., if a card $c$ has two hoops at points $p$ and $q$, then we create an edge $e_c = \{p^+, q^+\}$ of weight $w(e_c) = 1$), but the resulting graph is no longer bipartite.
    Even though, it is known that a maximum-weight perfect matching can be computed in polynomial time (deterministic) \cite{edmonds1965maximum}, and that an exact-weight perfect matching can be computed in polynomial time (randomized) \cite{mulmuley1987matching}.
    In both cases, it is known that the decision problem can be solved in $\mathrm{O}(n^\omega)$ time (randomized) \cite{sankowski2009maximum, huang2012efficient, sato2025exact} and an extra $n$ factor is required to find a solution itself in the latter (exact-weight) case, where $\omega \le 2.371339$ is the matrix multiplication exponent \cite{alman2025more}.
\end{remark}
\subsection{NP-hardness of \textsc{Swish without Rotation}}
\label{sec:3_1axis}

In this section, we prove that \textsc{Swish without Rotation} is NP-hard.
The proof of Theorem~\ref{thm:main_hard} will be completed based on this proof in the next section.

We reduce the following problem, which is known to be NP-complete.
For a directed graph $G = (V, E)$, we call an edge subset $F \subseteq E$ an \emph{even dicycle-factor} if $F$ covers all vertices of $V$ and each connected component of $F$ is a directed cycle of even length (see Figure~\ref{fig:even_cycle_factor}).

\problemdef{$\forall$Even Dicycle-Factor}
    {A directed graph $G = (V, E)$.}
    {Decide whether there is an even dicycle-factor $F \subseteq E$ or not.}

\begin{figure}
  \centering
  \begin{minipage}[b]{0.48\columnwidth}
  \includegraphics[width=0.9\linewidth]{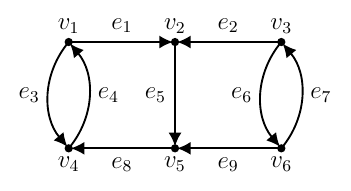}
  \end{minipage}
  \begin{minipage}[b]{0.48\columnwidth}
  \includegraphics[width=0.9\linewidth]{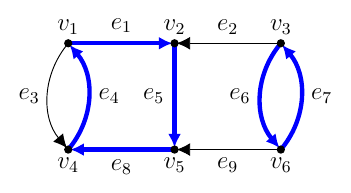}
  \end{minipage}
  \caption{Example of an instance (left) and its solution (right) of \textsc{$\forall$Even Dicycle-Factor}.}
  \label{fig:even_cycle_factor}
\end{figure}

\begin{theorem}[\!\!\cite{bang2014arc,horsch2025odd}]\label{thm:even_dicycle_factor}
    \textsc{$\forall$Even Dicycle-Factor} is NP-complete.
\end{theorem}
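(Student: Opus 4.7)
\textbf{Membership in NP} is immediate: given a candidate $F \subseteq E$, we check in linear time that every vertex has in-degree and out-degree exactly $1$ in $F$, so that $F$ decomposes into vertex-disjoint directed cycles, and then confirm that each connected component has even length.

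For the hardness part, the plan is a reduction from a standard NP-complete problem such as \textsc{3-SAT}. Because the unrestricted dicycle-factor problem is polynomial-time solvable (via bipartite perfect matching on the ``in/out'' split of each vertex), all the difficulty must come from the parity constraint, so the gadgets need to control cycle lengths modulo~$2$ rather than just coverage. For each variable $x_i$, I would design a small strongly connected digraph admitting exactly two local dicycle coverings of even length, one for $x_i = \mathrm{true}$ and one for $x_i = \mathrm{false}$, each activating a distinguished literal port. For each clause $C$, I would design a digraph connected to the literal ports of its three variables so that an even dicycle-factor on the clause gadget exists if and only if at least one incoming port is active. Correctness then follows by the standard forward/backward argument: a satisfying assignment yields a global even dicycle-factor by taking the corresponding covering of each variable gadget together with a valid covering of each clause gadget through an active literal, and conversely any even dicycle-factor induces a consistent truth assignment by reading off the internal covering used at each variable gadget.

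The main obstacle is parity control. Unlike matching-type constraints, cycle length is a global quantity, and a cycle traversing several gadgets can accumulate parity contributions from each segment. The gadgets therefore have to be engineered so that every inter-gadget subpath contributes a fixed number of arcs modulo~$2$ regardless of other choices, typically by padding inter-gadget connections with even-length ``detour'' paths and by inserting local structure (such as forced $2$-cycles on auxiliary vertices) that rules out spurious dicycle-factors using only parts of a gadget. Verifying that no unintended even dicycle-factor arises is the delicate combinatorial step, and is precisely the content of the constructions due to Bang-Jensen et al.~\cite{bang2014arc} and H\"orsch~\cite{horsch2025odd}; no new argument is needed in this paper, and the statement is invoked purely as a black box for the reduction in the next section.
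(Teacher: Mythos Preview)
The paper provides no proof of this theorem at all; it is stated purely by citation and then used as a black box for the subsequent reduction to \textsc{Swish without Rotation}. Your proposal correctly recognizes this and treats the result the same way, so in that sense you match the paper exactly.

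One minor caveat on your speculative sketch: the direct \textsc{3-SAT} reduction with parity-controlled variable/clause gadgets that you outline is a reasonable generic plan, but it is not how the cited proofs actually proceed. The Remark immediately following the paper's reduction indicates that \cite{bang2014arc} goes through the directed $2$-linkage problem (with some preprocessing), while \cite{horsch2025odd} uses a construction that, in combination with \cite{marx2005np}, even yields planarity. Since your write-up explicitly defers to those references rather than claiming to reproduce them, this does not affect correctness, but you should avoid presenting the \textsc{3-SAT} sketch as if it summarizes the cited arguments.
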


Let $G = (V, E)$ be an instance of \textsc{$\forall$Even Dicycle-Factor}, i.e., a directed graph.
We construct an equivalent instance $(\cC, k)$ of \textsc{Swish without Rotation} as follows.

Let $h \coloneqq |V|$ and $w \coloneqq 4$.
We index the rows by $V$ and the columns by $\{1, 2, 3, 4\}$, and simply denote by $v^j$ each point $(v, j)$ on the grid.

For each edge $e = (u, v) \in E$, we create a card $c_e = (u^1, v^1)$.
In addition, for each vertex $v \in V$, we create three cards $f_{v,j} = (v^j, v^{j+1})$ $(j = 1, 2, 3)$.
Let $\cC$ be the set of created cards (see Figure~\ref{fig:graph_to_card}).

\begin{figure}
  \centering
  \includegraphics[width=0.9\linewidth]{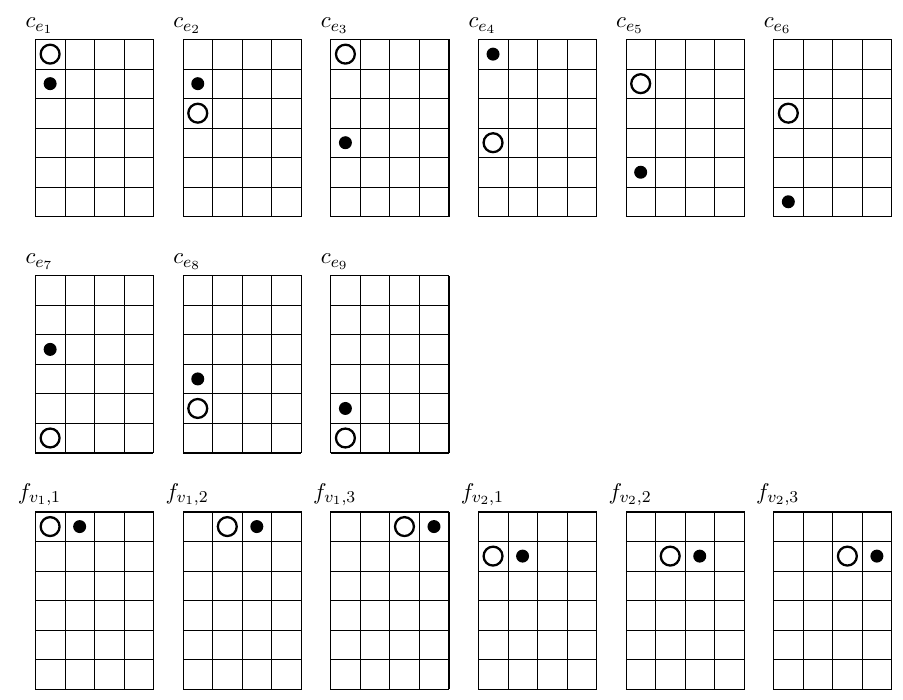}
  \caption{The set of cards created from the instance of \textsc{$\forall$Even Dicycle-Factor} in Figure~\ref{fig:even_cycle_factor}, where we omit $f_{v_i, j}$ for $i = 3, 4, 5, 6$ (they are analogous to $f_{v_1, j}$ and $f_{v_2, j}$).}
  \label{fig:graph_to_card}
\end{figure}

\begin{figure}
  \centering
  \includegraphics[width=0.9\linewidth]{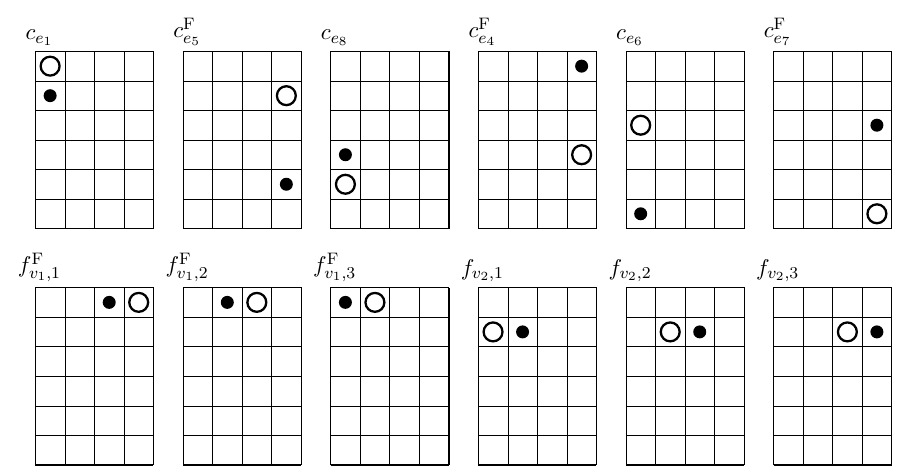}
  \caption{The swish without rotation corresponding to the solution in Figure~\ref{fig:even_cycle_factor}, where we omit $f_{v_i, j}$ for $i = 3, 4, 5, 6$.}
  \label{fig:swish_flip}
\end{figure}

The following claim completes the reduction, where we set $k = 4|V|$ (see Figure~\ref{fig:swish_flip}).

\begin{claim}
    $G$ has an even dicycle-factor if and only if $\cC$ admits a swish of size $4|V|$, where flipping cards is allowed but rotating cards is forbidden.
\end{claim}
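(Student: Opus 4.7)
The plan is to view the swish--size requirement $k = 4|V|$ as forcing a \emph{perfect cover}: since the grid has exactly $4|V|$ points and each card places one hoop and one ball, any swish of size $4|V|$ must place exactly one hoop and exactly one ball at every point. The whole proof then reduces to a local analysis of one row of the grid, plus a global parity argument around directed cycles.

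\textbf{From even dicycle-factor to swish.}
Given an even dicycle-factor $F$, I would $2$-color the arcs of each directed cycle alternately by colors ``$0$'' (unflipped) and ``$1$'' (flipped); this is possible precisely because every cycle has even length. For each $e\in F$ I use the card $c_e$ if its color is $0$ and $\flip{c}_e$ if its color is $1$. For each vertex $v$, the unique in-arc and out-arc of $v$ in $F$ have opposite colors, so $v$ falls into one of two symmetric local patterns; accordingly I use all three cards $f_{v,1},f_{v,2},f_{v,3}$ either all unflipped or all flipped. A direct check of the four columns of row $v$ shows that together with the at-most-two edge cards touching row $v$ this gives exactly one hoop and one ball at each of $v^1,v^2,v^3,v^4$. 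The total card count is $|F|+3|V|=|V|+3|V|=4|V|$.

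\textbf{From swish to even dicycle-factor.}
For the harder direction I would start from a swish $\cS$ of size $4|V|$ and fix a row $v$. The crucial observation is that the points $v^2$ and $v^3$ lie in no edge card (edge cards only touch columns $1$ and $4$), so they must be covered entirely by the $f_{v,\cdot}$ cards. Enumerating, for each $j\in\{1,2,3\}$, whether $f_{v,j}$ is unused, used unflipped, or used flipped, the four equations saying that $v^2$ and $v^3$ each receive exactly one hoop and exactly one ball collapse to only two feasible configurations: either all three cards are used unflipped (``type~I''), or all three are used flipped (``type~II''). This is the main case-analysis step and should be done compactly by checking which $f$-cards contribute hoops/balls to $v^2$ and $v^3$.

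In type~I, $f_{v,1}$ already places a hoop at $v^1$ and $f_{v,3}$ a ball at $v^4$, so covering $v^1$ and $v^4$ forces a ball at $v^1$ from some unflipped edge card $c_e$ with head $v$ and a hoop at $v^4$ from some flipped edge card $\flip{c}_{e'}$ with tail $v$; type~II is symmetric with the roles of in-arc/out-arc and flipped/unflipped swapped. Hence every vertex has exactly one in-arc and one out-arc among the chosen edge cards, so $F\coloneqq\{\,e\in E:c_e\text{ or }\flip{c}_e\in\cS\,\}$ is a directed cycle-factor. Moreover, at every vertex the in-arc and out-arc in $F$ have opposite flip-status. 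Walking around any cycle of $F$, the flip-status therefore alternates, and consistency after returning to the start forces the cycle length to be even. The main (and really only) obstacle is the row-level case analysis; once that dichotomy between type~I and type~II is established, the dicycle-factor and the even-length conclusion follow immediately.
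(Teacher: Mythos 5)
Your proposal is correct and follows essentially the same route as the paper's proof: the forward direction uses the alternating $2$-coloring of each even cycle to decide which edge cards to flip, and the reverse direction hinges on the observation that the middle points $v^2, v^3$ force all three cards $f_{v,1}, f_{v,2}, f_{v,3}$ to be used either all unflipped or all flipped, after which the coverage of $v^1$ and $v^4$ yields a dicycle-factor whose flip-status alternation forces even cycle lengths. No gaps.
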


\begin{proof}
    Let $F \subseteq E$ be an even dicycle-factor of $G$.
    Since each connected component of $F$ is a directed cycle of even length, we can color the edges of $F$ using two colors, say \emph{red} and \emph{blue}, such that any pair of adjacent edges receives different colors.
    Fix such a coloring, and let $\cS \coloneqq \{\, c_e \mid e \in F \,\} \cup \{\, f_{v, j} \mid v \in V,\ j \in \{1, 2, 3\} \,\}$.
    Note that $|\cS| = |F| + 3|V| = 4|V|$.
    We prove that $\cS$ is a swish.

    First, we fix how to use the card $c_e = (u^1, v^1) \in \cS$ for each $e = (u, v) \in F$.
    If $e$ is red, we use it as it is, i.e., as $c_e = (u^1, v^1)$.
    In contrast, if $e$ is blue, we use it by flipping, i.e., as $\flip{c}_e = (u^4, v^4)$.
    As $F$ is an (even) dicycle-factor, each vertex has exactly one incoming edge of $F$ and exactly one outgoing edge of $F$, which are colored differently.
    Thus, following the above definition, for each vertex $v \in V$, one of the two points $v^1$ and $v^4$ receives exactly one hoop and no ball and the other receives exactly one ball and no hoop.
    If $v^1$ receives a ball (and $v^4$ receives a hoop), then we use $f_{v,j} = (v^j, v^{j+1}) \in \cS$ $(j = 1, 2, 3)$ as they are, and otherwise, we use them by flipping, i.e., as $\flip{f}_{v,j} = (v^{5-j}, v^{4-j})$.
    In both cases, the two middle points $v^2$ and $v^3$ receive exactly one hoop and exactly one ball from these three cards, and $v^1$ and $v^4$ receive exactly one of a hoop and a ball that is consistent with what is received from the cards $c_e \in \cS$.
    Thus, $\cS$ is indeed a swish without rotation.

    We now prove the reverse direction.
    Let $\cS$ be a swish of size $4|V|$ without rotation, which means that we can put exactly one hoop and exactly one ball at every point on the grid only using flip.
    First, the middle points $v^2$ and $v^3$ for each vertex $v \in V$ can be covered only by the three cards $f_{v,j}$ $(j = 1, 2, 3)$, so we must use them.
    In particular, due to the constraint on the middle points, we can use the three cards either as they are or by flipping all of them.
    Then, we use exactly $|V|$ cards in the form of $c_e = (u^1, v^1)$ for edges $e = (u, v) \in E$ in total such that for each vertex $v \in V$, one of $v^1$ and $v^4$ receives exactly one hoop and no ball and the other receives exactly one ball and no hoop.

    Let $F \coloneqq \{\, e \in E \mid c_e \in \cS \,\}$.
    By the above argument, for each vertex $v \in V$, exactly one incoming edge (corresponding to a ball) and exactly one outgoing edge (corresponding to a hoop) is contained in $F$; this means that every connected component of $F$ is a directed cycle, which covers all the vertices.
    In addition, one of the two corresponding cards must be used as it is and the other must be flipped.
    According to this, we can color each directed cycle in $F$ using two colors (flipped or not) such that any pair of adjacent edges receives different colors, which implies that the length must be even.
    Thus, $F$ is an even dicycle-factor, which completes the proof.
\end{proof}

\begin{remark}
It was not explicitly written in \cite{bang2014arc, horsch2025odd} but the proofs of Theorem~\ref{thm:even_dicycle_factor} imply that \textsc{$\forall$Even Dicycle-Factor} is NP-hard even when the input graph is restricted so that the following conditions are both satisfied:
\begin{itemize}
    \item every vertex has in-degree and out-degree at most $3$ (by \cite{horsch2025odd} immediately and also by \cite{bang2014arc} with preprocessing the input graph before the reduction, for the $2$-linkage problem), and
    \item it is planar (by \cite{horsch2025odd} combined with \cite{marx2005np}).
\end{itemize}
Furthermore, it is easy to see that we can strengthen the first restriction on the input graph $G = (V, E)$ as follows:
\begin{itemize}
    \item $G$ is \emph{tripartite}, where $G$ admits a partition $(V^-, V^\circ, V^+)$ of the vertex set $V$ and every edge in $E$ is either from $V^-$ to $V^\circ$, from $V^\circ$ to $V^+$, or from $V^+$ to $V^-$,
    \item every vertex in $V^-$ has out-degree $1$ and in-degree at most $3$,
    \item every vertex in $V^\circ$ has both in-degree and out-degree $1$, and
    \item every vertex in $V^+$ has in-degree $1$ and out-degree at most $3$.
\end{itemize}
This can be observed via a standard transformation of directed graphs: split each vertex $v$ into three copies $v^-$ (for incoming edges), $v^\circ$ (for preserving the parity of length), and $v^+$ (for outgoing edges) with two extra edges $(v^-, v^\circ)$ and $(v^\circ, v^+)$, and replace each original edge $(u, v)$ with $(u^+, v^-)$.
This transformation does not change the problem.

Based on this restriction on the input graph $G = (V, E)$ of \textsc{$\forall$Even Dicycle-Factor}, we can impose the following restriction on the input set $\cC$ of cards of \textsc{Swish without Rotation}:
$(*)$ for each point $p$ on the grid, either
\begin{itemize}
    \item exactly one card has a hoop and at most two cards have a ball at $p$, or
    \item exactly one card has a ball and at most two cards have a hoop at $p$.
\end{itemize}
This can be seen by constructing an equivalent set $\cC'$ of cards from the set $\cC$ constructed in the proof.

By the tripartition and degree conditions, the edges from $V^-$ to $V^\circ$ and those from $V^\circ$ to $V^+$ form matchings.
For such edges $e = (v^-, v^\circ)$ and $e' = (v^\circ, v^+)$ $(v^- \in V^-,\ v^\circ \in V^\circ,\ v^+ \in V^+)$, we include the cards $c_{e} = ((v^-)^1, (v^\circ)^1)$ and $c_{e'} = ((v^\circ)^1, (v^+)^1)$ in $\cC'$ as in $\cC$.
In addition, we may assume that there exists a perfect matching between $V^+$ and $V^-$ consisting of edges from $V^+$ to $V^-$, because otherwise $G$ is clearly a no-instance of \textsc{$\forall$Even Dicycle-Factor} ($G$ does not admit any cycle-factors even without parity constraint), which can be tested in polynomial time.
Fix any such perfect matching $M$, and for each edge $e = (u^+, v^-) \in M$, we include the card $c_e = ((u^+)^1, (v^-)^1)$ in $\cC'$ as in $\cC$.
For each of the remaining edges $e = (u^+, v^-) \in E \setminus M$ from $V^+$ to $V^-$, we include the flipped card $\flip{c}_e = ((u^+)^4, (v^-)^4)$ in $\cC'$ instead of the card $c_e \in \cC$.
By construction, at this point, $\cC'$ satisfies the following:
\begin{itemize}
    \item for each vertex $v \in V = V^- \cup V^\circ \cup V^+$, exactly one card has a hoop and exactly one card has a ball at $v^1$;
    \item for each vertex $v^- \in V^-$, at most two cards have a ball and no card has a hoop at $(v^-)^4$;
    \item for each vertex $v^\circ \in V^\circ$, any card has neither a hoop nor a ball at $(v^\circ)^4$;
    \item for each vertex $v^+ \in V^+$, at most two cards have a hoop and no card has a ball at $(v^+)^4$.
\end{itemize}
In other words, the desired condition $(*)$ will be satisfied if each $v^1$ receives at most one of a hoop and a ball, each $(v^-)^4$ receives exactly one hoop, each $(v^\circ)^4$ receives exactly one of a hoop and a ball, and each $(v^+)^4$ receives exactly one ball, respectively, in addition.
It can be indeed achieved as follows: for each $v \in V^- \cup V^\circ$, we include the flipped cards $\flip{f}_{v,j} = (v^{5-j}, v^{4-j})$ $(j = 1, 2, 3)$ in $\cC'$ instead of $f_{v,j} \in \cC$, and for each $v^+ \in V^+$, we include the cards $f_{v^+,j} = ((v^+)^j, (v^+)^{j+1})$ $(j = 1, 2, 3)$ in $\cC'$ as in $\cC$.
Note that each of the middle points $v^2, v^3$ receives exactly one hoop and exactly one ball in either case.
Thus, we can construct a set $\cC'$ of cards that is equivalent to $\cC$ and satisfies the condition $(*)$.
\end{remark}
\subsection{NP-hardness of \textsc{Swish}}
\label{sec:3_2axis}
In this section, we prove that \textsc{Swish} and \textsc{Swish without Flip} are NP-hard as well as \textsc{Swish without Rotation} shown in the previous section.
The proof is almost the same, and the only difference is the coordination of points on the grid.

In the previous proof, we used the $|V| \times 4$ grid, where each point $(v, j)$ $(v \in V,\ 1 \le j \le 4)$ is denoted simply by $v^j$.
Here we consider arranging these points on a single vertical line.
Specifically, let $h \coloneqq 4|V|$ and $w \coloneqq 1$, and name the $4|V|$ points on the $h \times w$ grid using $v^j$ $(v \in V,\ 1 \le j \le 4)$ so that the point $(i, 1)$ is named $v^j$ if and only if the point $(h - i + 1, 1)$ is named $v^{5-j}$.
Then, the horizontal flip becomes meaningless, and the rotation plays the same role as the horizontal flip in the previous section.
Thus, the same reduction works for this setting (rotation is allowed and horizontal flip is either allowed or not), which completes the proof of Theorem~\ref{thm:main_hard}.
\section{Concluding Remarks}
\label{sec:4_conclusion}

In this paper, we have completed the complexity classification of {\Swish} by resolving the previously open case of two symbols per card.
Our results show that the computational complexity of {\Swish} is under interplay between the number of symbols per card and the geometric transformations allowed on cards.
In particular, we have identified a sharp boundary: when neither flip nor rotation is allowed, {\Swish} with two symbols per card can be solved in polynomial time, whereas allowing even a single type of flip or $180$-degree rotation already makes the problem NP-complete.
Together with earlier results, this establishes a complete picture of the complexity of {\Swish} across all natural variants of the game.

Beyond decision complexity, an intriguing line of research concerns the structure of card sets that admit no swish.
Dailly, Lafourcade, and Marcadet~\cite{dailly2024did} initiated a systematic study of such \emph{no-swish} configurations.
In their generalized setting, they constructed large families of card sets containing no swish, achieving asymptotically high densities.
Moreover, for the original commercial deck of $60$ cards, they identified a no-swish subset of size $28$.

A natural and still open question is whether this value $28$ is optimal.
Is it possible to find a larger no-swish subset within the original {\Swish} deck, or does every set of $29$ cards necessarily admit a swish?
More broadly, determining tight bounds on the size of no-swish sets for fixed decks or restricted symmetry models appears to be a promising direction for future work, which would deepen our understanding of the combinatorial structure underlying the game and could require techniques different from those used in the generalized setting.

\bibliography{ref_swish}
\end{document}